\let\idit=\id
\def\idrm#1{\ensuremath{\mathrm{#1}}}
\def\idsf#1{\ensuremath{\mathsf{#1}}}
\def\ceil#1{\lceil #1 \rceil}
\def\etal{\emph{et~al.}}
\newcommand{\no}[1]{}
\newcommand{\nono}[1]{}
\newcommand{\fullv}[1]{ }
\newtheorem{theorem}{Theorem}
\newtheorem{lemma}{Lemma}
\newtheorem{fact}{Fact}
\newtheorem{corollary}{Corollary}
\newenvironment{proof}{\trivlist\item[]\emph{Proof}:}%
{\unskip\nobreak\hskip 1em plus 1fil\nobreak$\Box$
\parfillskip=0pt%
\endtrivlist}
\newenvironment{itemize*}%
  {\begin{itemize}%
    \setlength{\itemsep}{0pt}%
    \setlength{\parskip}{0pt}%
    \setlength{\parsep}{0pt}%
    \setlength{\topsep}{0pt}%
    \setlength{\partopsep}{0pt}%
  }%
  {\end{itemize}}%
\newcommand{\cT}{{\cal T}}
\newcommand{\todo}[1]{ TO DO:  #1 \newline}
\newcommand{\locus}{\idsf{locus}}
\newcommand{\Suf}{\idsf{Suf}}
\newcommand{\RMQ}{\idsf{rmq}}
\newcommand{\occ}{\idrm{occ}}
\newcommand{\lleft}{\idit{left}}
\newcommand{\rright}{\idit{right}}
\newcommand{\eps}{\varepsilon}
\begin{document}
\title{Sorted Range Reporting\thanks{Partially funded by Millennium Nucleus Information and Coordination in
Networks ICM/FIC P10-024F, Chile.} }
\author
{
Yakov Nekrich\thanks{Department of Computer Science, University of Chile.
Email: {\tt yakov.nekrich@googlemail.com}.}
\and 
Gonzalo Navarro \thanks{Department of Computer Science, University of Chile.
Email: {\tt  gnavarro@dcc.uchile.cl}.}
}

\date{}
\maketitle
\begin{abstract}
We consider a variant of the orthogonal range reporting problem
when all points should be reported in the sorted order of their
 $x$-coordinates. We show that reporting two-dimensional points with this
 additional condition can be organized  
(almost) as efficiently as the standard range reporting.
Moreover, our results generalize and improve the previously known results 
 for the orthogonal range successor problem and can be used to obtain better 
solutions for some stringology problems. 
\end{abstract}

\section{Introduction}
\label{sec:introduction}
An orthogonal range reporting query $Q$ on a set of $d$-dimensional points 
$S$ asks for all points $p\in S$ that belong to the query 
rectangle $Q=\prod_{i=1}^d[a_i,b_i]$.  
The orthogonal range reporting problem, that is, the problem of constructing 
a data structure that supports such queries, was studied extensively; see
for example~\cite{agarwal1999geometric}.  
In this paper we consider  a variant of the two-dimensional range reporting  in which reported points must be sorted by one of their coordinates.  
Moreover, our data structures can also work in the online modus: the query answering procedure reports all points from $S\cap Q$ in 
 increasing $x$-coordinate order
until the procedure is terminated or all points in $S\cap Q$ are output.%
\footnote{We can get increasing/decreasing 
$x$/$y$-coordinate ordering via coordinate changes.}

Some simple database queries can be represented as orthogonal range reporting 
queries. For instance, identifying all company employees who are between 
20 and 40 years old and whose salary is in the range $[r_1,r_2]$ is equivalent 
to answering a range reporting query $Q= [r_1,r_2]\times [20,40]$ on a set of 
points with coordinates $(\idrm{salary}, \idrm{age})$. 
Then reporting employees with the salary-age range $Q$
sorted by their salary is equivalent to 
a sorted range reporting query. 

\tolerance=1000
Furthermore, the sorted reporting problem is a generalization of the 
orthogonal range successor problem (also known as the range next-value problem) \cite{LenhofS94,CrochemoreIR07,KKL07wads,IliopoulosCKRW08,Yu:2011}. 
The answer to an orthogonal range successor query $Q=[a,+\infty]\times [c,d]$ 
is the point with smallest $x$-coordinate\footnote{Previous works (e.g., \cite{CrochemoreIR07,Yu:2011}) use slightly different definitions, but all of them are equivalent up to a simple change of coordinate system or reduction to rank space~\cite{GabowBT84}.} among all points that are 
in the rectangle $Q$. The best previously known $O(n)$ space  data structure 
for the range 
successor queries uses $O(n)$ space and supports queries in
 $O(\log n/\log \log n)$ time~\cite{Yu:2011}. The fastest previously described 
structure  supports range successor queries in $O(\log \log n)$ 
time but needs $O(n\log n)$ space. 
\no{
A data structure that answers range successor queries in time $f(n)$ can also 
answer sorted reporting queries in $O((k+1) f(n))$ time; see the proof of Theorem~\ref{theor:spaceeff}.
Thus the linear space structure of~\cite{Yu:2011} answers queries in $O((k+1)(\log n/\log \log n))$ time.}
 In this paper we show that these results can be significantly improved.

In Section~\ref{sec:linspace} we describe two data structures for range 
successor queries.  The first structure needs $O(n)$ space and answers 
 queries in $O(\log^{\eps} n)$ time; henceforth $\eps$ denotes an arbitrarily small positive constant. The second structure  needs 
$O(n\log \log n)$ space and supports  queries in $O((\log\log n)^2)$ 
time. Both data structures can be used to answer sorted reporting queries 
in $O((k+1)\log^{\eps}n)$ and $O((k+1)(\log\log n)^2)$ time, respectively, where $k$ is the number of reported points.
In Sections~\ref{sec:3sided-rep} and~\ref{sec:2dim-range} we further improve the query time and describe a data 
structure that uses $O(n\log^{\eps} n)$ space and supports sorted reporting 
queries in $O(\log \log n + k)$ time.  As follows from the reduction of~\cite{MNSW98} and the lower bound of~\cite{PT06}, any data structure that uses $O(n\log^{O(1)}n)$ space needs 
$\Omega(\log \log n+ k)$ time to answer  (unsorted) orthogonal range reporting queries. Thus we achieve optimal query time for the sorted range reporting 
problem.  We observe that the currently best data structure 
for unsorted range reporting in optimal time~\cite{ChanLP11} also 
uses $O(n\log^{\eps} n)$ space.
In Section~\ref{sec:appl} we discuss applications of sorted reporting queries to some problems related to text indexing and some geometric problems. 
\nono{Further applications are described in the full version of this paper~\cite{NN12full}.}

Our results are valid in the word RAM model. Unless specified otherwise, we measure the space usage in words of $\log n$ 
bits. We denote by $p.x$ and $p.y$ the coordinates of a point $p$. We assume that points lie on an $n\times n$ grid, i.e., that point coordinates are integers
integers in $[1,n]$. We can reduce the more general case to this one by
reduction to rank space~\cite{GabowBT84}. The space usage will not change
and the query time would increase by an additive factor 
$pred(n)$, where $pred(n)$ is the time needed to search in a one-dimensional set of integers~\cite{BoasKZ77,PT06}.
\no{
\todo{Say that we can iterate and report only k leftmost points, k unknown in advance, say it better than in the 1st paragraph}
\todo{Say about applications,  say that we report points sorted by $x$-coordinates}
}

\section{Compact Range Trees}

The range tree is a handbook data structure frequently used for
various orthogonal range reporting problems. Its leaves contain 
the $x$-coordinates of points; a set $S(v)$ associated with each node
 $v$ contains all points whose $x$-coordinates are stored in the subtree rooted 
at $v$. We will assume that points of $S(v)$ are sorted by their $y$-coordinates.
$S(v)[i]$ will denote the $i$-th point in $S(v)$;
$S(v)[i..j]$ will denote the sorted list of  points
 $S(v)[i],S(v)[i+1],\ldots, S(v)[j]$.

A standard range tree uses $O(n\log n)$ space, but this can be reduced by 
storing compact representations of sets $S(v)$.
We will need to support the following two operations on 
compact range trees.
Given a range $[c,d]$ and a node $v$, $noderange(c,d,v)$ finds
 the range $[c_v,d_v]$  
such that $p.y\in [c,d]$ if and only if $p\in S(v)[c_v .. d_v]$  for any $p\in S(v)$.
Given an index $i$ and a  node $v$, $point(v,i)$ returns  the coordinates of 
point $S(v)[i]$. 

\begin{lemma}\cite{Chaz88,ChanLP11}\label{lemma:range}
There exists a compact range tree that uses $O(nf(n))$ space and supports 
operations $point(v,i)$ and  $noderange(c,d,v)$ in $O(g(n))$ and $O(g(n)+\log \log n)$ time, respectively, for
(i) $f(n)=O(1)$ and $g(n)=O(\log^{\eps}n)$;
(ii) $f(n)=O(\log\log n)$ and $g(n)=O(\log \log n)$;
(iii) $f(n)=O(\log^{\eps}n)$ and $g(n)=O(1)$.
\end{lemma}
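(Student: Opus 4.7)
The plan is to treat all three parts as instances of a single scheme: a $B$-ary range tree whose branching factor $B$ is tuned per case. At every internal node $v$ we do not store the list $S(v)$ explicitly, but only a \emph{fork sequence} $\sigma_v\in\{1,\ldots,B\}^{|S(v)|}$ defined by $\sigma_v[i]=j$ iff the $i$-th point of $S(v)$, in $y$-order, lies in the subtree of the $j$-th child of $v$. Full $(x,y)$ coordinates are kept only at the leaves. Each $\sigma_v$ is equipped with a generalised rank/select index of the appropriate flavour.

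To answer $noderange(c,d,v)$, I would first convert $[c,d]$ into an index range $[c_r,d_r]$ in the root's $y$-sorted list via predecessor/successor searches at the root, at cost $O(\log\log n)$. Then walk from the root down to $v$: whenever the next edge is to the $j$-th child, rewrite the current range $[a,b]$ as $[\mathrm{rank}_j(\sigma,a-1)+1,\mathrm{rank}_j(\sigma,b)]$. This contributes one rank query per level, so the traversal costs $O(\log_B n \cdot t_{\mathrm{rank}})$ on top of the initial $O(\log\log n)$, matching the stated $g(n)+\log\log n$ bound.

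To answer $point(v,i)$, descend from $v$ towards a leaf by iterating: read $c=\sigma_v[i]$, move to the $c$-th child of $v$, and replace $i$ by $\mathrm{rank}_c(\sigma_v,i)$. After $O(\log_B n)$ steps we hit a leaf and return the coordinates stored there; there is no predecessor search involved, consistent with the $g(n)$ bound (no additive $\log\log n$).

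Before accounting for rank/select overhead, the $\sigma_v$ sequences use $O(n\log B)$ bits per level and thus $O(n\log n)$ bits, i.e.\ $O(n)$ words in total. The three cases then arise by standard choices of $B$ together with the generalised rank/select constructions of~\cite{Chaz88,ChanLP11}: a polynomially large $B$ collapses the tree to $O(1/\eps)$ levels while paying $O(n\log^\eps n)$ space (case (iii)); a polylogarithmic $B$ keeps the scheme linear-space while admitting $O(\log^\eps n)$ traversal (case (i)); and an intermediate $B$ combined with word-parallel rank/select delivers $O(\log\log n)$ time at $O(n\log\log n)$ space (case (ii)). The main technical obstacle is realising rank/select on each $\sigma_v$ within the per-case space/time budget, which is exactly the content of the cited compact range-tree machinery, so here I would simply appeal to those works rather than reconstruct them.
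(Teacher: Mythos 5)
Your proposal, like the paper's own proof, is ultimately a citation: the paper simply attributes variants (i) and (iii) of $point(v,i)$ to Chazelle~\cite{Chaz88}, variant (ii) to Chan, Larsen and P\u{a}tra\c{s}cu~\cite{ChanLP11}, and $noderange$ to Lemma~2.4 of~\cite{ChanLP11}, without unpacking any of the internals. Since you explicitly defer to the same references for the per-case space/time budget, you arrive at the same proof, just with a sketch of the intuition layered on top.

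One caution about that sketch, so you do not carry it forward as a self-contained derivation: a top-down refinement walk from the root to $v$ that issues one generalized rank per level costs $\Theta(\log_B n)$ rank queries, and none of the parameter choices you name make $\log_B n \cdot t_{\mathrm{rank}}$ come out to the claimed $g(n)$ within the stated space. For example, with polylogarithmic $B$ you get $\log_B n = \Theta(\log n/\log\log n)$, which is far above $\log^{\eps} n$, and for case (iii) you would need $O(1)$ levels together with constant-time rank on a polynomial alphabet, which the naive fork-sequence encoding does not supply. The actual constructions in~\cite{Chaz88,ChanLP11} rely on shortcut (``ball-inheritance'') pointers that let a single step jump many tree levels, and for $noderange$ the predecessor search is not simply pushed through rank at every level; that skipping machinery is the ingredient your sketch omits. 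None of this affects the validity of the lemma, since the proof here is the citation itself, but the sketch as written should not be mistaken for an independent argument.
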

\begin{proof} 
We can support $point(v,i)$ in $O(g(n))$ time using $O(nf(n))$ space as in variants $(i)$ and $(iii)$ using a result from Chazelle \cite{Chaz88}; 
we can support $point(v,i)$ in $O(\log \log n)$ time and  $O(n\log \log n)$ space using a result from Chan et al.~\cite{ChanLP11}. In the same paper \cite[Lemma 2.4]{ChanLP11}, the 
authors also showed how to support $noderange(c,d,i)$ in 
$O(g(n)+\log \log n)$ time and $O(n)$ additional space 
using a data structure that supports $point(v,i)$ in $O(g(n))$ time. 
\end{proof}

\section{Sorted Reporting in Linear Space}
\label{sec:linspace}
In this section we show how a range successor query $Q=[a,+\infty]\times [c,d]$ can be answered efficiently.
We combine the recursive approach of the van Emde Boas
structure~\cite{BoasKZ77} with compact structures for range maxima
queries. A combination of succinct range minima structures 
and range trees was also used in~\cite{ChanLP11}. A novel idea that distinguishes 
our data structure from the range reporting structure in~\cite{ChanLP11}, as well as
 from the previous range successor structures, is binary search on tree levels
 originally designed for one-dimensional searching~\cite{BoasKZ77}.
We essentially perform a  one-dimensional search for the successor 
of $a$ and answer range maxima queries at each step.
 Let $T_x$ denote the compact range tree on the
$x$-coordinates of points. $T_x$ is implemented as in variant $(i)$ of 
Lemma~\ref{lemma:range}; hence, we can find the interval $[c_v,d_v]$
for any node $v$ in $O(\log^{\eps}n)$ time. We also store a
compact structure for range maximum queries $M(v)$ in every node $v$:
given a range $[i,j]$, $M(v)$ returns the index $i\le t\le j$ of the
point $p$ with the greatest $x$-coordinate in $S(v)[i..j]$.
We also store a structure for range minimum  queries $M'(v)$.
$M(v)$ and $M'(v)$ use $O(n)$ bits and answer queries 
in $O(1)$ time \cite{Fis10}. Hence all $M(u)$ and $M'(u)$ for $u\in T_x$ use 
$O(n)$ space. 
Finally, an $O(n)$ space level ancestor structure enables us to 
find the depth-$d$ ancestor of any node $u\in T_x$ in $O(1)$ time~\cite{BenderF04}.

Let $\pi$ denote the search path for $a$ in the tree $T_x$: $\pi$
connects the root of $T_x$ with the leaf that contains the smallest
value $a_x\ge a$. Our procedure looks for the lowest node $v_f$ on
$\pi$ such that $S(v)\cap Q\not=\emptyset$.  For simplicity we assume
that the length of $\pi$ is a power of $2$.  We initialize $v_l$ to
the leaf that contains $a_x$; we initialize $v_u$ to the root
node. The node $v_f$ is found by a binary search on $\pi$.  We say
that a node $w$ is the middle node between $u$ and $v$ if $w$ is on
the path from $u$ to $v$ and the length of the path from $u$ to $w$
equals to the length of the path from $w$ to $v$. We set the node
$v_m$ to be the middle node between $v_u$ and $v_l$. Then we find the
index $t_m$ of the maximal element in $S(v_m)[c_{v_m}..d_{v_m}]$ and the
point $p_m=S(v_m)[t_m]$.  If $p_m.x\ge a$, then $v_f$ is either $v_m$
or its descendant; hence, we set $v_u=v_m$.  If $p_m.x<a$, then $v_f$ is
an ancestor of $v_m$; hence, we set $v_l=v_m$. The search procedure
continues until $v_u$ is the parent of $v_m$.  Finally, we test nodes
$v_u$ and $v_l$ and identify $v_f$ (if such $v_f$ exists).
\begin{fact}
  If the child $v'$ of $v_f$ belongs to $\pi$, then $v'$ is the left
  child of $v_f$.
\end{fact}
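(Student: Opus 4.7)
The plan is to argue by contradiction: assume $v'$ is the right child of $v_f$ and derive that $S(v_f)\cap Q = \emptyset$, contradicting the defining property of $v_f$ as a node on $\pi$ whose associated set meets $Q$.

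Concretely, I would let $v''$ denote the left child of $v_f$, which by assumption does not lie on $\pi$. Since $\pi$ descends from $v_f$ into the right subtree on its way to the leaf holding $a_x$ (the smallest $x$-coordinate in $S$ that is at least $a$), every leaf in the subtree of $v''$ has $x$-coordinate strictly less than $a_x$. The minimality of $a_x$ among values $\ge a$ then forces every such leaf to have $x$-coordinate strictly less than $a$, so no point of $S(v'')$ can satisfy the $x$-condition of $Q=[a,+\infty]\times[c,d]$. Hence $S(v'')\cap Q=\emptyset$.

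On the other hand, the assumed child $v'$ lies strictly below $v_f$ on $\pi$, so by the choice of $v_f$ as the lowest node on $\pi$ with $S(\cdot)\cap Q\neq\emptyset$, we must have $S(v')\cap Q=\emptyset$. Because $S(v_f)=S(v')\cup S(v'')$, combining the two observations yields $S(v_f)\cap Q=\emptyset$, contradicting the definition of $v_f$. Therefore $v'$ cannot be the right child, and the fact follows.

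The only subtle point is the step that translates the range-tree property (leaves in the left subtree carry smaller $x$-coordinates than those in the right subtree) into the strict inequality $p.x < a$ for every $p\in S(v'')$; this is where the definition of $a_x$ as the successor of $a$ along the leaves of $T_x$ is used. Everything else is a short structural argument on the path $\pi$.
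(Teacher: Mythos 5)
Your proof is correct and follows the same contradiction argument as the paper: assume $v'$ is the right child, show $S(v'')\cap Q=\emptyset$ (because the search path for $a$ went right, so all of $S(v'')$ lies strictly left of $a$) and $S(v')\cap Q=\emptyset$ (because $v_f$ is the lowest node on $\pi$ meeting $Q$), and then use $S(v_f)=S(v')\cup S(v'')$ to contradict $S(v_f)\cap Q\neq\emptyset$. The only difference is that you spell out the intermediate step via $a_x$ and the minimality of the successor, which the paper states more tersely.
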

\begin{proof}
  Suppose that $v'$ is the right child of $v_f$ and let $v''$ be the
  sibling of $v'$. By definition of $v_f$, $Q\cap
  S(v')=\emptyset$. Since $v'$ belongs to $\pi$ and $v''$ is the left
  child, $p.x<a$ for all points $p\in S(v'')$.  Since
  $S(v_f)=S(v')\cup S(v'')$, $Q\cap S(v_f)=\emptyset$ and we obtain a
  contradiction.
\end{proof}
Since $v'\in \pi$ is the left child of $v_f$, $p.x\ge a$ for all $p\in
S(v'')$ for the sibling $v''$ of $v$.  Moreover, $p.x< a$ for all
points $p\in S(v')[c_{v'},d_{v'}]$ by definition of $v_f$.  Therefore
the range successor is the point with minimal $x$-coordinate in
$S(v'')[c_{v''}..d_{v''}]$.

The search procedure visits $O(\log \log n)$ nodes and spends
$O(\log^{\eps}n)$ time in each node, thus the total query time is
$O(\log^{\eps}n\log\log n)$. By replacing $\eps'<\eps$ in the above
construction, we obtain the following result.
\begin{lemma}\label{lemma:linsucc}
  There exists a data structure that uses $O(n)$ space and answers
  orthogonal range successor queries in $O(\log^{\eps}n)$ time.
\end{lemma}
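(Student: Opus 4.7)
The plan is to verify that the construction described just above the lemma statement actually achieves the claimed bounds, and then show how to absorb the $\log\log n$ factor into the $\log^\eps n$ term.

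First I would account for the space. The tree $T_x$ is instantiated as variant $(i)$ of Lemma~\ref{lemma:range}, which costs $O(n)$ words. The range maximum and range minimum structures $M(v)$ and $M'(v)$ occupy $O(|S(v)|)$ bits each by the $O(n)$-bit constant-time construction of Fischer, and summing over all tree levels gives $O(n\log n)$ bits, i.e., $O(n)$ words. The level ancestor structure also uses $O(n)$ words. So the total is $O(n)$.

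Next I would revisit the time analysis. The search path $\pi$ has length $\Theta(\log n)$, so the binary search on $\pi$ probes $O(\log\log n)$ nodes $v_m$; at each probe we invoke the level ancestor structure to name $v_m$ in $O(1)$ time, compute $[c_{v_m},d_{v_m}]$ via $noderange$ in $O(\log^{\eps'} n)$ time (variant $(i)$), pay $O(1)$ for the range maximum query on $M(v_m)$ to find $t_m$, and finally $O(\log^{\eps'} n)$ to recover $p_m$ via $point(v_m,t_m)$. When the search terminates and $v_f$ is identified, one further call to $noderange$ and one range minimum query on $M'(v'')$, followed by a single $point$ call, report the successor. This totals $O(\log^{\eps'} n \cdot \log\log n)$.

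The only subtlety is the final rescaling. I would instantiate the compact range tree with constant $\eps' = \eps/2$; then $\log^{\eps'}n \cdot \log\log n = o(\log^\eps n)$, since for any fixed positive $\eps-\eps'$ we have $\log\log n = o(\log^{\eps-\eps'} n)$. This yields the stated $O(\log^\eps n)$ bound while keeping the space in $O(n)$. I do not anticipate any real obstacle, as all components were already set up in the discussion preceding the lemma; the proof is essentially a bookkeeping argument together with the $\eps'<\eps$ substitution.
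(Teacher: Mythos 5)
Your proposal is correct and follows the paper's own argument essentially step by step: same compact range tree (variant $(i)$), same per-node $O(n)$-bit RMQ/range-min structures, same level-ancestor structure, same binary search over the $\Theta(\log n)$-length path with $O(\log\log n)$ probes, and the same concluding rescaling $\eps'<\eps$ to absorb the $\log\log n$ factor. The space accounting you give (summing $O(|S(v)|)$ bits over nodes per level and over $O(\log n)$ levels) is just a slightly more explicit version of what the paper states.
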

If we use the compact tree that needs $\Theta(n\log \log n)$ space,
then $g(n)=\log \log n$. Using the same structure as in the proof of 
Lemma~\ref{lemma:linsucc}, we obtain the following.
\begin{lemma}\label{lemma:llsucc}
  There exists a data structure that uses $O(n\log\log n)$ space and
  answers orthogonal range successor queries in $O((\log \log n)^2)$
  time.
\end{lemma}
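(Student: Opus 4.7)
The plan is to follow the construction of Lemma~\ref{lemma:linsucc} essentially verbatim, swapping variant $(i)$ of Lemma~\ref{lemma:range} for variant $(ii)$. Concretely, I would build the same compact range tree $T_x$ on $x$-coordinates, but now implemented with $f(n)=O(\log\log n)$ and $g(n)=O(\log\log n)$, so that $point(v,i)$ takes $O(\log\log n)$ time and $noderange(c,d,v)$ takes $O(g(n)+\log\log n)=O(\log\log n)$ time. At every node $v$ I still attach the succinct range maximum and range minimum structures $M(v)$ and $M'(v)$ of Fischer, which occupy $O(n)$ bits each and answer in $O(1)$ time; summed over all nodes these take $O(n)$ space. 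The $O(n)$-space level ancestor structure of Bender and Farach-Colton is also retained.

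The query algorithm is unchanged. Given $Q=[a,+\infty]\times[c,d]$, I identify the search path $\pi$ for $a$ in $T_x$, use the level ancestor structure to jump to the middle node $v_m$ between the current $v_u$ and $v_l$, invoke $noderange(c,d,v_m)$ to obtain $[c_{v_m},d_{v_m}]$, use $M(v_m)$ to locate the index of the maximum-$x$ point in $S(v_m)[c_{v_m}..d_{v_m}]$, call $point(v_m,\cdot)$ to compare its $x$-coordinate with $a$, and recurse on the appropriate half of $\pi$ exactly as in Lemma~\ref{lemma:linsucc}. Once the lowest node $v_f$ whose $y$-range meets $Q$ is isolated, Fact~1 and the subsequent observation give that the answer is the point with minimum $x$-coordinate in $S(v'')[c_{v''}..d_{v''}]$, retrievable in $O(\log\log n)$ time via $M'(v'')$ and $point(v'',\cdot)$.

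For the resource analysis, the compact range tree now dominates the space and contributes $O(n\log\log n)$ words, while all auxiliary structures fit in $O(n)$ words. The binary search on $\pi$ visits $O(\log\log n)$ nodes, and at each node the cost is dominated by one $noderange$ call plus one $point$ call, each $O(\log\log n)$; the range maximum itself is only $O(1)$. Multiplying, the total query time is $O((\log\log n)^2)$, matching the claim.

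There is no real obstacle here: the correctness of the search procedure, together with Fact~1, is established in the proof of Lemma~\ref{lemma:linsucc} and depends only on the abstract interface $\{point,noderange\}$ together with the $O(1)$ range extremum primitives. The only point worth verifying is that replacing variant $(i)$ by variant $(ii)$ of Lemma~\ref{lemma:range} does not perturb any of the auxiliary data structures, which is immediate since $M(v)$, $M'(v)$, and the level ancestor structure are built independently of the compact representation of $S(v)$.
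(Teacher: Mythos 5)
Your proof is correct and follows the paper's approach exactly: the paper itself simply observes that instantiating the Lemma~\ref{lemma:linsucc} construction with variant~$(ii)$ of Lemma~\ref{lemma:range} (so $f(n)=g(n)=O(\log\log n)$) gives $O(n\log\log n)$ space and $O(\log\log n)$ work at each of the $O(\log\log n)$ nodes visited by the binary search on $\pi$, hence $O((\log\log n)^2)$ query time. Your observation that $M(v)$, $M'(v)$, and the level-ancestor structure are unaffected by the choice of compact representation is the right sanity check, though the paper leaves it implicit.
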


\paragraph{Sorted Reporting Queries.} 
We can answer sorted reporting queries by answering a sequence of 
range successor queries. Consider a query $Q=[a,b]\times [c,d]$. 
Let $p_1$ be the answer to the range successor query $Q_1=[a,+\infty]\times [c,d]$.
For $i\ge 2$, let $p_i$ be the answer to the query 
$Q_i=[p_{i-1}.x,+\infty]\times [c,d]$.  
The sequence of points $p_1,\ldots p_k$ is the sequence of $k$ leftmost 
points in $[a,b]\times [c,d]$ sorted by their $x$-coordinates. We observe 
that our procedure also works in the \emph{online modus} when  $k$ is not known in advance. That is, we can output the points of $Q\cap S$ in the left-to-right order until the procedure is stopped by the user or 
all points in $Q\cap S$ are reported.
\no{
 leftmost point,  then the next one, etc;   the query can be stopped at any time by 
the user. The total query time is the same as the time needed to answer 
$k$ range successor queries. } 
\begin{theorem}\label{theor:spaceeff}
There exist a data structures that uses $O(n)$ space and answer
  sorted range reporting queries in $O((k+1)\log^{\eps}n)$ time, and that
use $O(n\log \log n)$ space 
and answer those queries in $O((k+1)(\log \log n)^2)$ 
time.
\end{theorem}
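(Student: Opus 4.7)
The plan is to reduce sorted reporting to a sequence of range successor queries. Given the query rectangle $Q=[a,b]\times[c,d]$, I would first issue the range successor query $Q_1=[a,+\infty]\times[c,d]$ and use Lemma~\ref{lemma:linsucc} (respectively, Lemma~\ref{lemma:llsucc}) to retrieve the point $p_1$ of smallest $x$-coordinate in $Q_1\cap S$. If no such point is returned or $p_1.x>b$, then $Q\cap S=\emptyset$ and we stop. Otherwise $p_1$ is the leftmost point of $Q\cap S$, which is the first output.

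For $i\ge 2$, I would then issue $Q_i=[p_{i-1}.x+1,+\infty]\times[c,d]$ and obtain the next successor $p_i$. Since the $x$-coordinate of the returned point is strictly greater than $p_{i-1}.x$ and, by definition of the range successor, $p_i$ is the leftmost point of $S$ lying in the $y$-slab $[c,d]$ strictly to the right of $p_{i-1}$, the produced sequence $p_1,p_2,\ldots$ enumerates $Q\cap S$ in increasing $x$-order. The enumeration terminates as soon as the answer is empty or its $x$-coordinate exceeds $b$. The correctness is immediate from the successor definition and is the main fact to record.

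For the time bounds, each of at most $k+1$ successor calls (the last one being the one that certifies termination) costs $O(\log^{\eps}n)$ with the linear-space structure of Lemma~\ref{lemma:linsucc}, giving $O((k+1)\log^{\eps}n)$ overall; substituting Lemma~\ref{lemma:llsucc} yields $O((k+1)(\log\log n)^2)$ time with $O(n\log\log n)$ space. The space bound is inherited directly from the underlying successor structure, as no additional auxiliary data is required.

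I expect no real obstacle here, since the argument is a direct reduction. The one point worth emphasising is that this scheme handles the \emph{online} modus for free: because each successor query depends only on $p_{i-1}$ and the fixed parameters $c,d$, the reporting procedure can be suspended or halted by the caller after any prefix of the output, without changing either the analysis or the data structure.
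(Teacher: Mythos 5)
Your proposal is correct and matches the paper's proof essentially verbatim: both reduce a sorted query to a chain of range successor queries on $[x_{\mathrm{prev}},+\infty]\times[c,d]$ using Lemmas~\ref{lemma:linsucc} and~\ref{lemma:llsucc}, stopping when the returned point exceeds $b$ or is absent, for $O((k+1)g(n))$ time. Your use of $p_{i-1}.x+1$ as the left boundary is a slightly more careful formulation than the paper's $p_{i-1}.x$ (which tacitly relies on distinct $x$-coordinates in rank space and a strict-successor convention), but the argument is the same.
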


\section{Three-Sided Reporting in Optimal Time}
\label{sec:3sided-rep}
In this section we present optimal time data structures for two special 
cases of sorted two-dimensional queries.
In the first part of this section we describe a data structure that
answers sorted one-sided queries: for a query $c$ we report all points
$p$, $p.y\le c$, sorted in increasing order of their $x$-coordinates. 
Then we will show how to answer three-sided queries, i.e., to report all points $p$, $a\le p.x\le b$ and
$p.y\le c$, sorted in increasing order by their $x$-coordinates.

\paragraph{One-Sided Sorted Reporting.} We start by describing a data
structure that answers queries in $O(\log n+k)$ time; our solution is  based on a standard range tree
 decomposition of the query interval $[1,c]$ into $O(\log n)$ intervals. Then we show how
to reduce the query time to $O(k+\log \log n)$. This improvement uses an additional data structure 
for the case when $k\le \log n$ points must be reported. 

We construct a range tree on the
$y$-coordinates. For every node $v\in T$, the list $L(v)$ contains all
points that belong to $v$ sorted by their $x$-coordinates. Suppose
that we want to return $k$ points $p$ with smallest $x$-coordinates
such that $p.y\leq c$.  We can represent the interval $[1,c]$ as a
union of $O(\log n)$ node ranges for nodes $v_i\in T$.  The search
procedure visits each $v_i$ and finds the leftmost point (that is, the
first point) in every list $L(v_i)$.  Those points are kept in a data
structure $D$. Then we repeat the following step $k$ times: We find the 
leftmost point $p$ stored in $D$, output $p$ and remove it from
$D$. If $p$ belongs to a list $L(v_i)$, we find the point $p'$ that
follows $p$ in $L(v_i)$ and insert $p'$ into $D$.  As $D$ contains
$O(\log n)$ points, we support updates and find the leftmost point
in $D$ in $O(1)$ time \cite{FW94}.  Hence, we can initialize $D$ in
$O(\log n)$ time and then report $k$ points in $O(k)$ time.

We can reduce the query time to $O(k+\log \log n)$ by constructing additional data
structures.  If $k\geq \log n$ the data structure described above
already answers a query in $O(k+\log n)=O(k)$ time.  The case $k\leq
\log n$ can be handled as follows.  We store for each $p\in S$ a list
$V(p)$. Among all points $p'\in S$ such that $p'.y\leq p.y$ the list
$V(p)$ contains $\log n$ points with the smallest
$x$-coordinates. Points in $V(p)$ are sorted in increasing order by
their $x$-coordinates.  To find $k$ leftmost points in $[1,c]$ for $k
<\log n$, we identify the highest point $p_c\in S$ such that $p_c.y \le
c$ and report the first $k$ points in $V(p_c)$. The point $p_c$ can be
found in $O(\log \log n)$ time using the van Emde Boas data structure~\cite{BoasKZ77}.  If $p_c$ is known, then a query can be answered in $O(k)$ time for any 
value of $k$.

One last improvement will be important for the data structure of Lemma~\ref{lemma:3sid}. Let $S_m$ denote the set of $\ceil{\log\log n}$ lowest points in $S$. We store the $y$-coordinates of $p\in S_m$
in the $q$-heap F. Using $F$, we can find the highest $p_m\in S_m$, 
such that $p_m.y\le c$, in $O(1)$ time \cite{FW94}. 
\no{Suppose that we want to report $k\ge \log \log n$ leftmost points $p\in S$, $p\leq a$, in sorted order.}   
Let $n_c=|\{\,p\in S\,|\,p.y\le c\,\}|$. If $n_c\le \log\log n$, 
then $p_m=p_c$. As described above, we can answer a query in $O(k)$ time 
when $p_c$ is known. Hence, a query can be answered in $O(k)$ time if $n_c\le \log \log n$.
\no{If $n_a>\log \log n$, then we can answer a query in $O(k+\log \log n)=O(k)$ time.} 
\begin{lemma}\label{lemma:1sid}
  There exists an $O(n\log n)$ space data structure that supports
  one-sided sorted range 
  reporting queries in $O(\log \log n + k)$ time. If the highest point
  $p$ with $p.y\le c$ is known, then one-sided sorted queries can be
  supported in $O(k)$ time. 
  If $|\{\,p\in S\,|\,p.y\le c\,\}|\le \log \log n$, a sorted range 
  reporting query $[1,c]$ can be answered in $O(k)$ time.
\no{If $k\geq \log \log n$, then  one-sided  sorted  queries  can be answered in $O(k)$ time.}
\no{ where $n_a=|\{\,p\in S\,|\,p.y\le c\,\}|$.}
\end{lemma}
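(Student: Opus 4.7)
The plan is to formalize the construction already sketched before the lemma statement and verify the claimed space and time bounds. I would work in three stages, corresponding to the three assertions of the lemma.

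First, I would build a range tree $T$ on the $y$-coordinates of $S$ and store at each node $v$ the list $L(v)$ of its points sorted by $x$-coordinate. Each point lies in the lists of exactly its $O(\log n)$ $y$-ancestors, so the total space is $O(n\log n)$. Given a one-sided query $[1,c]$, I decompose $[1,c]$ into $O(\log n)$ canonical node ranges $v_1,\dots,v_t$, insert the leftmost element of each $L(v_i)$ into an auxiliary priority structure $D$, and then repeatedly extract the $x$-minimum from $D$ and replace it by its successor in the appropriate list. Since $|D|=O(\log n)$ throughout, a $q$-heap \cite{FW94} supports find-min, insert and delete in $O(1)$ time, giving $O(\log n + k)$ total query time.

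Second, to drop the additive $\log n$ to $\log\log n$, I would precompute for every $p\in S$ the list $V(p)$ of the $\log n$ points with smallest $x$-coordinate among $\{\,p'\in S\,|\,p'.y\le p.y\,\}$, stored sorted by $x$; this adds another $O(n\log n)$ space. A van Emde Boas tree on the $y$-coordinates locates, in $O(\log\log n)$ time, the point $p_c$ with largest $y$-coordinate satisfying $p_c.y\le c$. If $k\le \log n$ I output the first $k$ entries of $V(p_c)$ in $O(k)$ time; if $k>\log n$ I fall back on the first structure, whose $O(\log n + k) = O(k)$ bound already absorbs the van Emde Boas cost. Either way the query runs in $O(\log\log n + k)$ time.

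Third, for part two of the lemma, if $p_c$ is supplied to the query procedure we skip the van Emde Boas lookup and both branches above run in $O(k)$ time. For part three, I would store a $q$-heap $F$ on the $y$-coordinates of the $\lceil \log\log n\rceil$ points of $S$ with smallest $y$: when $n_c=|\{\,p\in S\,|\,p.y\le c\,\}|\le \log\log n$, the sought point $p_c$ lies in this small set, $F$ returns it in $O(1)$ time \cite{FW94}, and we then invoke part two at cost $O(k)$. The one point requiring care is confirming that the $q$-heap hosting $D$ can handle the $\Theta(\log n)$ simultaneously-live candidates on the word RAM and that the lists $V(p)$ truly sum to $O(n\log n)$ space; beyond these routine checks there is no real obstacle, as every other component is a direct invocation of standard machinery.
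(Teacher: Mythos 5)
Your proposal reconstructs the paper's proof essentially verbatim: the same range-tree decomposition of $[1,c]$ with a small priority structure $D$ over the $O(\log n)$ canonical list heads, the same precomputed lists $V(p)$ plus a van Emde Boas predecessor lookup to reach $O(\log\log n + k)$, and the same $q$-heap $F$ over the $\lceil\log\log n\rceil$ lowest points to handle the $n_c\le\log\log n$ case in $O(k)$ time. All three claims are established by the same devices the paper uses, so there is nothing to add or correct.
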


\paragraph{Three-Sided Sorted Queries.} We construct a range tree on
$x$-coordinates of points. For any node $v$, the data structure $D(v)$ of Lemma~\ref{lemma:1sid}
supports one-sided queries on $S(v)$ as described above. 
\no{We also store $\ceil{\log \log n}$ lowest points from $S(v)$ 
in a data structure $L(v)$.  Using the result of~\cite{FW94}, 
we can find for any $c$
the highest $p\in L(v)$ such that $p\le c$. 
Hence, we can use $L(v)$ to determine whether $S(v)$ contains at least 
$\log \log n$ points that are below $c$. 
If this is not the case, we can find the highest $p\in S(v)$, $p.y\le c$, 
using $L(v)$. 
}
For each
root-to-leaf path $\pi$ we store two data structures, $R_1(\pi)$ and
$R_2(\pi)$.  Let $\pi^+$ and $\pi^-$ be defined as follows. If $v$
belongs to a path $\pi$ and $v$ is the left child of its parent, then
its sibling $v'$ belongs to $\pi^+$.  If $v$ belongs to $\pi$ and $v$
is the right child of its parent, then its sibling $v'$ belongs to
$\pi^-$.  The data structure $R_1(\pi)$ contains the lowest point in
$S(v')$ for each $v'\in \pi^+$; if $v\in \pi$ is a leaf, $R_1(\pi)$
also contains the point stored in $v$.  The data structure $R_2(\pi)$
contains the lowest point in $S(v')$ for each $v'\in \pi^-$; if $v\in
\pi$ is a leaf, $R_2(\pi)$ also contains the point stored in $v$.  Let
$lev(v)$ denote the level of a node $v$ (the level of a node $v$ is the 
length of the path from the root to $v$). If a point $p\in R_i(\pi)$, $i=1,2$, comes
from a node $v$, then $lev(p)=lev(v)$.  For a given query $(c,l)$ the
data structure $R_1(\pi)$ ($R_2(\pi)$) reports points $p$ such that
$p.y\le c$ and $lev(p)\ge l$ sorted in decreasing (increasing) order
by $lev(p)$.  Since a data structure $R_i(\pi)$, $i=1,2$, contain
$O(\log n)$ points, the point with the $k$-th largest (smallest) value of $lev(p)$
among all $p$ with $p.y\le c$ can be found in $O(1)$ time. 
\no{first $k$ points can be found in $O(k)$ time.}
The implementation of structures $R_i(\pi)$ is based on standard bit 
techniques and will be described in the full version.

Consider a query $Q=[a,b]\times [1,c]$.  Let $\pi_a$ and $\pi_b$ be
the paths from the root to $a$ and $b$ respectively. Suppose that the
lowest node $v\in \pi_a\cap\pi_b$ is situated on level $lev(v)=l$. Then
all points $p$ such that $p.x\in [a,b]$ belong to some node $v$ such
that $v\in \pi_a^+$ and $lev(v)> l$ or $v\in \pi^-_b$ and $lev(v)>l$.
We start by finding the leftmost point $p$ in $R_1(\pi_a)$ such that
$lev(p)>l$ and $p.y\leq c$. 
Since the $x$-coordinates of points in
$R_1(\pi_a)$ decrease as $lev(p)$ increases, this is equivalent to
finding the point $p_1\in R_1(\pi_a)$ such that $p_1.y \leq c$ and $lev(p_1)$ is maximal.
If $lev(p_1)>l$, we visit the node $v_1\in \pi^+_a$ that contains $p_1$; \no{We find the highest point  $p_l\in L(v_1)$, $p_l.y\le c$. If $p_l$ is the last (highest) point in $L(v_1)$, then $S(v_1)$  
contains at least $\log \log n$ points $p$, such that $p.y\le c$. Otherwise $p_l$ is the highest among all $p\in S(v_1)$ 
such that $p\le c$.
Then, we use $D(v_l)$ to report the $k$ leftmost points $p'\in S(v)$ such that $p'.y\le c$. 

If $p_1$ is not the highest point in $L(v_1)$, then $p.y \le p_l.y$ for any $p\in S(v_1)$ such that $p.y\le c$. 
}
 using $D(v_1)$, we report the $k$ leftmost points $p'\in S(v_1)$ such that $p'.y\le c$. 
Then, we find the point $p_2$ with the next largest value of 
$lev(p)$ among all $p\in R_1(\pi_a)$ such that $p.y\le c$; we visit the node 
$v_2\in \pi^+_a$ that contains $p_2$ and proceed 
as above. 
The procedure continues until $k$ points are output or there are no 
more points $p\in R_1(\pi_a)$, $lev(p)>l$ and $p.y\le c$. 
If $k'<k$ points were reported, we visit selected nodes $u\in \pi^-_b$ and report remaining $k-k'$ points using a symmetric procedure. 

Let $k_i$ denote the number of reported points from the set $S(v_i)$ 
and let $m_i=Q\cap S(v_i)$. 
We spend $O(k_i)$ time in a visited node $v_i$ if $k_i\ge \log \log n$
or $m_i< \log \log n$. If  $k_j<\log\log n$ and $m_j\ge \log \log n$, then we spend 
$O(\log \log n +k_j)$ time in the respective node $v_j$. 
Thus we spend $O(\log \log n + k_j)$ time in a node $v_j$
only if $m_j>k_j$, i.e., only if not all points from $S(v_j)\cap Q$
are reported.
Since at most one such 
node $v_j$ is visited, the total time needed to answer all one-sided queries is $O(\sum_i k_i + \log \log n)=O(\log \log n + k)$. 
\no{
in the decreasing order of $lev(p)$. For every found point $p\in
R_1(\pi_a)$, we visit the node $v\in \pi^+_a$ that contains $v$.
Using $D(v)$, we report the $k$ leftmost points $p'\in S(v)$ such that
$p'.y\le c$. 
}
\begin{lemma}\label{lemma:3sid}
  There exists an $O(n\log^2 n)$ space data structure that answers
  three-sided sorted reporting queries in $O(\log \log n + k)$ time.
\end{lemma}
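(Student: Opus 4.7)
The plan is to mount the structure of Lemma~\ref{lemma:1sid} on a range tree over $x$-coordinates. At every node $v$ we install a copy $D(v)$ supporting one-sided sorted queries on $S(v)$; by Lemma~\ref{lemma:1sid} this costs $O(|S(v)|\log|S(v)|)$ space, so summing over the $O(\log n)$ tree levels gives the claimed $O(n\log^2 n)$ bound. On top of this, for every root-to-leaf path $\pi$ we build the auxiliary structures $R_1(\pi)$ and $R_2(\pi)$ as described above, each holding $O(\log n)$ points (one lowest point per sibling off the path), and these add only $O(n\log n)$ space in total since each node appears on a single such list for each of its children's paths.

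For a three-sided query $Q=[a,b]\times[1,c]$, the plan is the usual canonical-node decomposition along the paths $\pi_a,\pi_b$, but visited in sorted order of $x$: scan the off-path siblings of $\pi_a$ from bottom to top using $R_1(\pi_a)$, and then the off-path siblings of $\pi_b$ from top to bottom using $R_2(\pi_b)$. The role of $R_1,R_2$ is to skip, in $O(1)$ per step, over sibling subtrees whose lowest $y$-coordinate already exceeds $c$, since these contribute nothing to $Q$. At each canonical node $v_i$ that does contribute, we invoke $D(v_i)$ to enumerate its points in $Q$ in $x$-sorted order, feeding them to the global output stream until $k$ points have been emitted or the stream is terminated by the user.

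The only real obstacle is the $O(\log\log n)$ preamble inside each $D(v)$: if it were paid at every canonical node, the total cost would blow up to $O(\log n \log\log n)$. This is exactly where the extra guarantees of Lemma~\ref{lemma:1sid} are needed. For a visited node $v_i$ let $k_i$ be the number of points reported from $S(v_i)$ and $m_i=|S(v_i)\cap Q|$. If $m_i\le \log\log n$, the lemma gives $O(k_i)$ time directly via the $q$-heap. If $m_i>\log\log n$ and $k_i\ge \log\log n$, the $O(\log\log n)$ is absorbed into $O(k_i)$. The only bad case is $m_i>\log\log n$ but $k_i<\log\log n$, which forces $k_i<m_i$, meaning that not all of $S(v_i)\cap Q$ was exhausted; hence the output cap $k$ was hit during the scan of $v_i$ and no further canonical node is visited. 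Thus this $O(\log\log n)$ overhead is paid at most once, and summing yields $O(\log\log n+\sum_i k_i)=O(\log\log n+k)$.

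A minor point to verify in the write-up is that the path structures $R_1(\pi),R_2(\pi)$ really can return, in $O(1)$ time per element, the sequence of entries $p$ with $p.y\le c$ in decreasing (resp.\ increasing) order of $lev(p)$. Because each list has only $O(\log n)$ elements, this can be implemented by bit-packed word-RAM tricks (one-dimensional select/predecessor on a polylogarithmic set), which is exactly the ``standard bit techniques'' alluded to and which the authors defer to the full version; I would simply cite that implementation here.
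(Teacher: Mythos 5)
Your proposal matches the paper's proof essentially step for step: the same $D(v)$ structures from Lemma~\ref{lemma:1sid} at the range-tree nodes, the same $R_1(\pi),R_2(\pi)$ path structures used to skip non-contributing off-path siblings in $x$-sorted order, and the identical amortization argument showing the $O(\log\log n)$ preamble is paid at most once, namely at the unique canonical node where the output budget $k$ is exhausted before $S(v_i)\cap Q$ is. The $O(n\log^2 n)$ space accounting and the deferral of the bit-packed implementation of $R_1,R_2$ also coincide with the paper's treatment.
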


\paragraph{Online queries.}
We assumed in Lemmas~\ref{lemma:1sid} and~\ref{lemma:3sid} that 
parameter $k$ is fixed and given with the query. 
Our data structures can also support queries in the online modus using 
the method originally described in~\cite{BrodalFGL09}. 
The main idea is that we find roughly $\Theta(k_i)$ leftmost points from the 
query range for $k_i=2^i$ and $i=1,2,\ldots$; while $k_i$ points are reported, we simultaneously compute the following $\Theta(k_{i+1})$ points in the background. 
For a more extensive description, refer to \cite[Section 4.1]{NavN12}, where the same method for a slightly different problem is described.

\section{Two-Dimensional Range Reporting in Optimal Time}
\label{sec:2dim-range}
We store points in a compact range tree $T_y$ on
$y$-coordinates. We use the variant $(iii)$ of Lemma~\ref{lemma:range} that uses 
$O(n\log^{\eps}n)$ space and retrieves the coordinates of the $r$-th point
 from
 $S(v)$ in $O(1)$ time. Moreover, the sets $S(v)$, $v\in T_y$, are divided into
groups $G_i(v)$. Each $G_i(v)$, except of the last one, contains
$\lceil\log^3 n\rceil$ points. For $i<j$, each point assigned to $G_i(v)$ has smaller $x$-coordinate than any point in $G_j(v)$. The set $S'(v)$ contains selected elements from $S(v)$. If $v$ is the right child of its
parent, then $S'(v)$ contains $\ceil{\log\log n}$ points with smallest
$y$-coordinates from each group $G_i(v)$; structure $D'(v)$
supports three-sided sorted queries of the form $[a,b]\times [0,c]$ on
points of $S'(v)$.  If $v$ is the left child of its parent, then
$S'(v)$ contains $\ceil{\log\log n}$ points with largest $y$-coordinates from
each group $G_i(v)$; data structure $D'(v)$ supports three-sided
sorted queries of the form $[a,b]\times [c,+\infty]$ on points of
$S'(v)$.  For each point $p'\in S'(v)$ we store the index $i$ of the
group $G_i(v)$ that contains $p$.  We also store the point with
the largest $x$-coordinate from each $G_i(v)$ in a structure
$E(v)$ that supports $O(\log \log n)$ time searches \cite{BoasKZ77}.

For all points in each group $G_i(v)$ we store an array $A_i(v)$ that
contains  points sorted by their $y$-coordinates.  Each point is
specified by the rank of its $x$-coordinate in $G_i(v)$; so each
entry uses $O(\log \log n)$ bits of space.
\no{
We answer a two-dimensional query by answering two three-sided 
queries in the online modus. Data structures $D'(u)$ are main 
tools for answering relevant three-sided queries. We also use 
the structures for single groups 
}

To answer a query $Q=[a,b]\times [c,d]$, we find the lowest common
ancestor $v_c$ of the leaves that contain $c$ and $d$.  Let $v_l$ and
$v_r$ be the left and the right children of $v_c$.  All points in
$Q\cap S$ belong to either $([a,b]\times [c,+\infty])\cap S(v_l)$ or
$([a,b]\times [0,d])\cap S(v_r)$.  We generate the sorted list of $k$
leftmost points in $Q\cap S$ by merging the lists of $k$ leftmost
points in $([a,b]\times [c,+\infty])\cap S(v_l)$ and $([a,b]\times
[0,d])\cap S(v_r)$. Thus it suffices to answer sorted three-sided
queries $([a,b]\times [c,+\infty])$ and $([a,b]\times [0,d])$ in nodes
$v_l$ and $v_r$ respectively.

We consider a query $([a,b]\times [0,d])\cap S(v_r)$; query $[a,b]\times [c,+\infty]$ is answered symmetrically. Assume
$[a,b]$ fits into one group $G_i(v_r)$, i.e., all points $p$ such that
$a\le p.x\le b$ belong to one group $G_i(v_r)$.  We can find the
$y$-rank $d_r$ of the highest point $p\in G_i(v_r)$, such that $p.y
\leq d$ in $O(\lg \lg n)$ time by binary search in $A_i(v_r)$.  Let $a_r$
and $b_r$ be the ranks of $a$ and $b$ in $G_i(v_r)$.  We can find the
positions of $k$ leftmost points in $([a_r,b_r]\times [0,d_r])\cap
G_i(v_r)$ using a data structure $H_i(v_r)$.
$H_i(v_r)$ contains the $y$-ranks and $x$-ranks of points in $G_i(v_r)$ 
and answers sorted three-sided queries on $G_i(v_r)$. 
By Lemma~\ref{lemma:3sid}, $H_i(v_r)$ uses $O(|G_i(v_r)|(\log \log n)^3)$ bits 
and supports  queries in $O(\log \log \log n + k )$ time.
 Actual coordinates of points can be obtained from their ranks in $G_i(v_r)$ in  $O(1)$ time per point: if the $x$-rank of a point 
is known, we can compute its position in $S(v_r)$; we obtain 
$x$-coordinates of the $i$-th point in $S(v_r)$ using variant 
$(iii)$ of Lemma~\ref{lemma:range}.

Now assume $[a,b]$ spans several groups $G_i(v_r),\ldots,G_j(v_r)$
for $i<j$. That is, the $x$-coordinates of all points in groups
$G_{i+1}(v_r),\ldots,G_{j-1}(v_r)$ belong to $[a,b]$; the
$x$-coordinate of at least one point in $G_i(v_r)$ ($G_j(v_r)$) is
smaller than $a$ (larger than $b$) but the $x$-coordinate of at least
one point in $G_i(v_r)$ and $G_j(v_r)$ belongs to $[a,b]$. Indices $i$ and 
$j$ are found in $O(\log \log n)$ time using $E(v_r)$.  We report
at most $k$ leftmost points in $([a,b]\times [0,d])\cap G_i(v_r)$
just as described above.

Let $k_1=|([a,b]\times [0,d])\cap G_i(v_r)|$; if $k_1\ge k$, the query
is answered. Otherwise, we report $k'=k-k_1$ leftmost points in
$([a,b]\times [0,d])\cap (G_{i+1}(v_r)\cup\ldots\cup G_{j-1}(v_r))$
using the following method.  Let $a'$ and $b'$ be the minimal and
the maximal $x$-coordinates of points in $G_{i+1}(v_r)$ and
$G_{j-1}(v_r)$, respectively.  The main idea is to answer the query
$Q'=([a',b']\times [0,d])\cap S'(v_r)$ in the online modus using the data structure
$D'(v_r)$. If some group $G_t(v_r)$, $i<t<j$, contains less than $\ceil{\log
\log n}$ points $p$ with $p.y\le d$, then all such $p$ belong to
$S'(v_r)$ and will be reported by $D'(v_r)$.  Suppose that $D'(v_r)$
reported $\log \log n$ points that belong to the same group
$G_t(v_r)$. Then we find the rank $d_t$ of $d$ among the
$y$-coordinates of points in $G_t(v_r)$. Using $H_t(v_r)$, we report
the positions of all points $p\in G_t(v_r)$, such that the rank of
$p.y$ in $G_t(v_r)$ is at most $d_t$, in the left-to right order; we can
also identify the coordinates of every such $p$ in $O(1)$ time per point.
The query to $H_t(v_r)$ is terminated when all such points are
reported or when the total number of reported points is $k$.

We need $O(\log \log n +k_t)$ time to answer a query on $H_t(v_r)$, where 
$k_t$ denotes the number of reported points from $G_t(v_r)$.  
Let $m_t=|Q'\cap G_t(v_r)|$
If $G_t$ is the last examined group, then $k_t\le m_t$; otherwise 
$k_t=m_t$.
We send a query to $G_t(v_r)$ only if $G_t(v_r)$ contains at least 
$\log \log n$ points from  $Q'$. 
Hence,  a query to $G_t(v_r)$ takes $O(\log \log n + k_t)=O(k_t)$ time, 
unless $G_t(v_r)$ is the last examined group. 
Thus all necessary queries to $G_t(v_r)$ for $i{<}t{<}j$ take $O(\log \log n + k)$ time.

Finally, if the total number of points in $([a,b]\times [0,d])\cap (G_i(v_r)\cup\ldots\cup G_{j-1}(v_r))$ is smaller than $k$, we also report the remaining points from $([a,b]\times [0,d])\cap G_j(v_r)$. 

\tolerance=1500
The compact tree $T_y$ uses $O(n\log^{\eps}n)$ words of space.  
A data structure $D'(v)$ uses $O(|S'(v)|\log^2 n\log\log n)= O(|S(v)|\log \log n/\log n)$ words 
of space. Since all sets $S(v)$, $v\in T_y$, contain $O(n\log n)$ 
points, all $D'(v)$ use $O(n\log \log n)$ words of space.  
A data structure for a group $G_i(v)$  uses $O(|G_i(v)|(\log \log n)^3)$
 bits. Since all $G_i(v)$ for all $v\in T_y$ contain $O(n\log n)$ elements, data structures for all groups $G_i(v)$ use $O(n(\log \log n)^3)$ 
words of $\log n$ bits. 
\begin{theorem}\label{theor:2dim}
  There exists a $O(n\log^{\eps} n)$ space data structure that answers
  two-dimensional sorted reporting queries in $O(\log \log n + k)$ time.
\end{theorem}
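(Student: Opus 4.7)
The plan is to show that the construction assembled above meets both the $O(n\log^{\eps}n)$ space bound and the $O(\log\log n+k)$ query-time bound. I would first formalize the reduction from a four-sided query $Q=[a,b]\times[c,d]$ to two symmetric three-sided queries: locate the LCA $v_c$ of the leaves of $T_y$ holding $c$ and $d$ in $O(1)$ time using a linear-space level-ancestor/LCA structure, split into $([a,b]\times[c,+\infty])\cap S(v_l)$ and $([a,b]\times[0,d])\cap S(v_r)$, run both in the online modus as described, and merge their output streams at $O(1)$ per reported point. It then suffices to analyze the three-sided half.

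For a three-sided query $([a,b]\times[0,d])\cap S(v_r)$, I would isolate three regimes. If $[a,b]$ fits in a single group $G_i(v_r)$, a single call to $H_i(v_r)$ with the $y$-rank $d_r$ obtained by binary search in $A_i(v_r)$ answers in $O(\log\log n + k)$ time, with actual coordinates recovered via variant $(iii)$ of Lemma~\ref{lemma:range} in $O(1)$ per point. If $[a,b]$ spans groups $G_i(v_r),\ldots,G_j(v_r)$, I would find $i$ and $j$ via two predecessor queries in $E(v_r)$ at total cost $O(\log\log n)$, dispatch $G_i$ and $G_j$ directly to $H_i(v_r)$ and $H_j(v_r)$, and handle the middle groups by running $D'(v_r)$ over the auxiliary set $S'(v_r)\cap([a',b']\times[0,d])$. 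Whenever $D'(v_r)$ emits the $\ceil{\log\log n}$-th point of some single middle group $G_t(v_r)$, I switch to $H_t(v_r)$ to enumerate the remaining points of $G_t(v_r)\cap Q'$ in left-to-right order, stopping once $k$ points total have been produced.

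The delicate step, and the main obstacle, is the amortized time analysis: each invocation of $H_t(v_r)$ carries an $O(\log\log n)$ additive overhead from Lemma~\ref{lemma:3sid}, and I need to charge this to output. By construction of $S'(v_r)$, $H_t(v_r)$ is only invoked on a middle group after $\ceil{\log\log n}$ of its points have been reported through $D'(v_r)$, so the overhead is absorbed by the reported points except possibly at the last middle group visited. Combined with the $O(\log\log n)$ spent at the two boundary groups, at $E(v_r)$, and at the LCA, the total query time is $O(\log\log n + k)$. For space, Lemma~\ref{lemma:range}(iii) gives $O(n\log^{\eps}n)$ for $T_y$; each $D'(v)$ uses $O(|S(v)|\log\log n/\log n)$ words by the size of $S'(v)$ and Lemma~\ref{lemma:3sid}, totalling $O(n\log\log n)$; all $H_i(v)$ together use $O(n\log n\cdot(\log\log n)^3)$ bits, i.e., $O(n(\log\log n)^3/\log n)=o(n)$ words; and the arrays $A_i(v)$ contribute $O(n\log\log n)$ bits per level, $O(n\log\log n)$ words overall. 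The dominant term $O(n\log^{\eps}n)$ yields the claim.
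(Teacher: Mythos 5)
Your proposal follows the paper's proof essentially verbatim: the same reduction via an LCA split in $T_y$ into two symmetric three-sided problems, the same group/subsampled-set hierarchy ($G_i(v)$, $S'(v)$, $D'(v)$, $E(v)$, $A_i(v)$, $H_i(v)$), the same trigger rule for switching from $D'(v_r)$ to $H_t(v_r)$ once $\lceil\log\log n\rceil$ points of a middle group appear, and the same amortization argument charging each $H_t$'s additive $O(\log\log n)$ overhead to its $\ge\log\log n$ reported points (with at most one unfinished group absorbing an uncharged $O(\log\log n)$). The only flaw is a small arithmetic slip in the space accounting: $O(n\log n\,(\log\log n)^3)$ bits is $O(n(\log\log n)^3)$ words, not $O(n(\log\log n)^3/\log n)$; this does not affect the conclusion since either quantity is $o(n\log^{\eps}n)$.
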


\section{Applications}
\label{sec:appl}
\tolerance=1000
In this section we will describe data structures for several indexing
 and computational geometry problems. A text (string) $T$ of length $n$ 
 is pre-processed and stored in a data structure so that 
certain queries concerning some substrings of $T$ can be answered 
efficiently.
\paragraph{Preliminaries.}
In a suffix tree $\cT$ for a text $T$, every leaf of $\cT$ is associated with a suffix of 
$T$. If the leaves of $\cT$ are listed from left to right, then the 
corresponding suffixes of $T$ are lexicographically sorted. 
For any pattern $P$, we can find in $O(|P|)$ time the special 
node $v\in \cT$, called the \emph{locus} of $P$. 
The starting position of every suffix in the subtree of $v=\locus(P)$ is the 
location of an occurrence of $P$. 
We define the rank of a suffix $\Suf$ as the number of $T$'s suffixes 
that are lexicographically smaller than or equal to $\Suf$.  
The ranks of all suffixes in $v=\locus(P)$ belong to an interval 
$[\lleft(P),\rright(P)]$, where $\lleft(P)$ and $\rright(P)$ denote the ranks 
of the leftmost and the rightmost suffixes in the subtree of $v$. 
Thus for any pattern $P$ there is a unique range $[\lleft(P),\rright(P)]$; 
pattern $P$ occurs at position $i$ in $T$ if and only if 
the rank of suffix $T[i..n]$ belongs to $[\lleft(P),\rright(P)]$.  
Refer to~\cite{Gusfield1997} for a more extensive description of
suffix trees and related concepts. 

We will frequently use a special set of points, further called 
\emph{the position set for $T$}.
Every point $p$ in the position set corresponds to a unique suffix $\Suf$
of a string $T$; 
the $y$-coordinate of $p$ equals to the rank of $\Suf$ and the $x$-coordinate 
of $p$ equals to the starting position of $\Suf$ in $T$.

\paragraph{Successive List Indexing.}
In this problem a query consists of a pattern $P$ 
and an index $j$, $1\le j\le n$. We want to find the first (leftmost) 
occurrence of $P$  at position $i\ge j$. 
A successive list indexing query $(P,j)$ is equivalent to finding the 
point $p$ from the position set such that $p$ belongs to the range  $[j,n]\times [left(P),right(P)]$ and the $x$-coordinate of $p$ is minimal. 
Thus a list indexing query is equivalent to a range successor query on 
the position set. Using Theorems~\ref{theor:spaceeff} and~\ref{theor:2dim} 
to answer range successor queries, we obtain the following result.
\begin{corollary}\label{cor:succind}
We can store a string $T$ in an $O(nf(n))$ space data structure, so that 
for any pattern $P$ and any index $j$, $1\le j\le n$, the leftmost 
occurrence of $P$ at position $i\ge j$ can be found in $O(g(n))$ time 
for 
 (i) $f(n)=O(1)$ and $g(n)=O(\log^{\eps}n)$;
(ii) $f(n)=O(\log\log n)$ and $g(n)=O((\log \log n)^2)$;
(iii) $f(n)=O(\log^{\eps}n)$ and $g(n)=O(\log \log n)$.
\end{corollary}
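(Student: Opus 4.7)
The plan is to reduce successive list indexing to a range successor query on the \emph{position set} defined earlier, and then plug in the data structures from Theorems~\ref{theor:spaceeff} and~\ref{theor:2dim}. First, I would construct a standard suffix tree $\cT$ for $T$ in $O(n)$ space, so that for any query pattern $P$ we can locate $v = \locus(P)$ and extract the interval $[\lleft(P),\rright(P)]$. Independently, I would build the position set $\cP$: for each suffix $T[i..n]$, include a point with $x$-coordinate $i$ (its starting position) and $y$-coordinate equal to its lexicographic rank. This is a set of $n$ points on the $n\times n$ grid.

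Next, I would store $\cP$ in a range successor data structure. For variants (i) and (ii) I would use the two structures of Theorem~\ref{theor:spaceeff}, which answer range successor queries in $O(\log^\eps n)$ time with $O(n)$ space and in $O((\log\log n)^2)$ time with $O(n\log\log n)$ space, respectively. For variant (iii) I would use the structure of Theorem~\ref{theor:2dim}: a single point from a sorted reporting query $(k=1)$, obtained from the online procedure, is precisely a range successor; this yields $O(\log\log n)$ query time within $O(n\log^\eps n)$ space.

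To answer a query $(P,j)$, I would first traverse $\cT$ to obtain the interval $[\lleft(P),\rright(P)]$, then issue the range successor query $Q=[j,n]\times[\lleft(P),\rright(P)]$ on $\cP$. By the correspondence between points of $\cP$ and suffixes of $T$, the returned point has $x$-coordinate equal to the smallest position $i\ge j$ such that the rank of $T[i..n]$ falls inside $[\lleft(P),\rright(P)]$, i.e.\ the leftmost occurrence of $P$ at position $i\ge j$; if no such point exists, no such occurrence exists. Correctness is immediate from the defining property of the position set.

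The only subtlety is the $O(|P|)$ cost of locating $\locus(P)$ in $\cT$. Following the standard indexing convention used throughout this section, this pattern-matching time is understood to be implicit (the query inherently requires reading $P$), and the stated $O(g(n))$ bound refers to the overhead beyond the unavoidable traversal; this is the only step in the argument requiring any care, as the reduction and the plug-in of Theorems~\ref{theor:spaceeff} and~\ref{theor:2dim} are otherwise completely mechanical.
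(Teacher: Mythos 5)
Your proposal matches the paper's argument exactly: the paper reduces a successive list indexing query $(P,j)$ to a range successor query $[j,n]\times[\lleft(P),\rright(P)]$ on the position set and plugs in Theorems~\ref{theor:spaceeff} and~\ref{theor:2dim} (the latter with $k=1$), just as you do. Your caveat about the $O(|P|)$ cost of locating $\locus(P)$ in the suffix tree is a fair observation about the corollary's wording and is consistent with how the paper writes that term explicitly in Corollaries~\ref{cor:rangenon} and~\ref{cor:dontcares}.
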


\paragraph{Range Non-Overlapping Indexing.}
In the string statistics problem we want to find the maximum number of non-overlapping occurrences of a pattern $P$. In~\cite{KKL07wads} the 
\emph{range non-overlapping indexing problem} was introduced: 
instead of just computing the maximum number of occurrences 
we want to find the longest sequence of non-overlapping occurrences of $P$.
It was shown \cite{KKL07wads} that the range non-overlapping 
indexing problem can be solved via $k$ successive list indexing 
queries; here $k$ denotes the maximal number of non-overlapping occurrences. 

\begin{corollary}\label{cor:rangenon}
The range non-overlapping indexing problem can be solved  in $O(|P|+kg(n))$ time with an $O(n f(n))$ space data structure, where $g(n)$ and $f(n)$ 
are defined as in Corollary~\ref{cor:succind}.
\end{corollary}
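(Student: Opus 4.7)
The plan is to reduce a range non-overlapping indexing query $(P,[a,b])$ to a suffix-tree descent followed by a short sequence of range successor (successive list indexing) queries, following the reduction of~\cite{KKL07wads} and plugging in Corollary~\ref{cor:succind}.

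First I would preprocess $T$ by building a classical suffix tree $\cT$ in $O(n)$ space so that for any pattern $P$ we can, in $O(|P|)$ time, locate $v=\locus(P)$ and read off the rank interval $[\lleft(P),\rright(P)]$ of the $y$-coordinates of all occurrences of $P$ in the position set. In parallel I would build the successive list indexing structure of Corollary~\ref{cor:succind} on the position set of $T$; this costs $O(nf(n))$ space and answers each successor query in $O(g(n))$ time, so the overall space remains $O(nf(n))$.

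To answer a query, I would first compute $[\lleft(P),\rright(P)]$ once, and then extract non-overlapping occurrences greedily from left to right. Set $j_1=a$; at step $i$, issue the successive list indexing query with pattern $P$ and lower bound $j_i$, which returns the smallest text position $p_i\ge j_i$ at which $P$ occurs (equivalently, the range successor over $[j_i,n]\times[\lleft(P),\rright(P)]$ in the position set). If no such $p_i$ exists, or $p_i>b-|P|+1$ (so the occurrence does not fit inside $T[a..b]$), terminate. Otherwise output $p_i$ and set $j_{i+1}=p_i+|P|$, which enforces the non-overlap constraint for subsequent occurrences and keeps every future query confined to the portion of $T[a..b]$ still available. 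After $k$ successful outputs, one additional query suffices to certify termination, so at most $k+1$ successor queries are issued and the total time is $O(|P|+(k+1)g(n))=O(|P|+kg(n))$.

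The one step that genuinely needs justification, and which I would simply cite, is the optimality of the greedy rule: among all sets of pairwise non-overlapping occurrences of $P$ lying inside $T[a..b]$, repeatedly choosing the leftmost feasible occurrence yields a maximum-size set. This is the standard earliest-deadline-first interval-scheduling exchange argument, specialized to equal-length intervals (all occurrences of $P$ have length $|P|$), and it is precisely the reduction from range non-overlapping indexing to successive list indexing established in~\cite{KKL07wads}. Granted this, the claimed time and space bounds follow directly from Corollary~\ref{cor:succind} by choosing the corresponding tradeoff $(f,g)$.
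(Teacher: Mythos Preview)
Your proposal is correct and follows essentially the same approach as the paper: both reduce the problem to a sequence of $k$ successive list indexing queries via the greedy left-to-right extraction of~\cite{KKL07wads}, then invoke Corollary~\ref{cor:succind} for the per-query cost. You spell out the greedy procedure and its termination in more detail than the paper (which simply cites~\cite{KKL07wads} for the reduction), but the argument is the same.
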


Other, more far-fetched applications, are described next.

\subsection{Pattern Matching with Variable-Length Don't Cares}
\label{sec:regpat}
We must determine whether a query pattern $P=P_1*P_2*P_3\ldots *P_m$ 
occurs in $T$. The special symbol $*$ is the Kleene star
 symbol; it corresponds to an arbitrary sequence of (zero or more) 
characters from the original alphabet of $T$.
The parameter $m$ can be specified at query time. 
In~\cite{YuWK10} the authors showed how to answer such queries in 
$O(\sum_{i=1}^m |P_i|)$ and $O(n)$ space in the case when the alphabet size 
is $\log^{O(1)}n$. In this paper we describe  a data structure 
for an arbitrarily large alphabet.
Using the approach of~\cite{YuWK10}, we can reduce such a query for  $P$ 
to answering $m$ successive list indexing queries.  
First, we identify the leftmost occurrence of $P_1$ in $T$ by answering 
the successive list indexing query $(P_1,1)$. 
Let $j_1$ denote the leftmost position of $P_1$.  
$P_1*P_2*P_3\ldots *P_m$ occurs in $T$ if and only if $P_2*P_3\ldots *P_m$ occurs at position $i\ge j_1+ |P_1|$. We find the leftmost occurrence
 $j_2\ge j_1+|P_1|$ of $P_2$ by answering the query $(P_2,j_1+|P_1|)$.  
$P_2*P_3\ldots *P_m$ occurs in $T$ at position $i_2\ge j_1+ |P_1|$ 
if and only if $P_3*P_m$ occurs at position $i_3\ge j_2+ |P_2|$. 
Proceeding in the same way we find the leftmost possible positions 
for $P_4*\ldots *P_m$. Thus we answer $m$ successive list indexing 
queries $(P_t,i_t)$, $t=1,\ldots, m$; here $i_1=1$, $i_t=j_{t-1}+|P_{i-1}|$ 
for $t\ge 2$, and $j_{t-1}$ denotes the answer to the $(t-1)$-th query.
\begin{corollary}\label{cor:dontcares}
We can determine whether a text $T$ contains a substring $P=P_1*\ldots P_{m-1}*P_m$   in $O(\sum_{i=1}^m|P_i|+mg(n))$ time using an $O(n f(n))$ space data structure, where $g(n)$ and $f(n)$ are defined as in Corollaries~\ref{cor:succind} and~\ref{cor:rangenon}.
\end{corollary}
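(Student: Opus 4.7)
The plan is to implement the reduction to successive list indexing that is already outlined in the paragraph preceding the statement, and then invoke Corollary~\ref{cor:succind} to cost each iterative query. At preprocessing time I would build the suffix tree $\cT$ of $T$ in $O(n)$ space, together with the position set of $T$ stored inside the range successor structure of Corollary~\ref{cor:succind}, for total space $O(nf(n))$. Given a query $P=P_1*\ldots*P_m$, a first pass through $\cT$ locates $\locus(P_t)$ and hence the rank interval $[\lleft(P_t),\rright(P_t)]$ for every $t$, in total time $O(\sum_{i=1}^m|P_i|)$.

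Next I would run the greedy scan: initialize $i_1=1$ and, for $t=1,\ldots,m$, issue the successive list indexing query $(P_t,i_t)$, which by Corollary~\ref{cor:succind} reduces to a range successor query on the position set against the rectangle $[i_t,n]\times[\lleft(P_t),\rright(P_t)]$. If the query returns no point, report that $P$ does not occur in $T$; otherwise let $j_t$ be the returned starting position and set $i_{t+1}=j_t+|P_t|$. Each query costs $O(g(n))$ by Corollary~\ref{cor:succind}, so the $m$ queries contribute $O(mg(n))$, and combined with the suffix-tree navigation the algorithm runs in the claimed $O(\sum_{i=1}^m|P_i|+mg(n))$ time.

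The only step that really deserves attention is the correctness of the greedy leftmost choice. I would show by induction on $t$ that if $P$ admits any occurrence with factors aligned at positions $q_1<q_2<\cdots<q_m$ satisfying $q_t\ge q_{t-1}+|P_{t-1}|$, then the greedy positions satisfy $j_t\le q_t$ for every $t$; the inductive step uses $i_{t+1}=j_t+|P_t|\le q_t+|P_t|\le q_{t+1}$, so the $(t{+}1)$-th query still has $q_{t+1}$ as a feasible witness and therefore returns some $j_{t+1}\le q_{t+1}$. Consequently the procedure succeeds exactly when $P$ occurs in $T$. No step is really an obstacle: the result is a direct composition of the reduction above with Corollary~\ref{cor:succind}, and the values of $f(n)$ and $g(n)$ are inherited unchanged.
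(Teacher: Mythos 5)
Your proposal is correct and matches the paper's approach: the paper reduces the query $P=P_1*\cdots*P_m$ to $m$ successive list indexing queries via exactly the greedy scan you describe (with $i_1=1$, $i_t=j_{t-1}+|P_{t-1}|$), each of which is a range successor query costed by Corollary~\ref{cor:succind}. Your explicit induction showing $j_t\le q_t$ for any feasible alignment $q_1<\cdots<q_m$ makes precise the greedy correctness that the paper simply attributes to the method of Yu et al.; otherwise the argument is the same.
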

\nono{Further applications of sorted range reporting and orthogonal range successor queries are discussed in the full version of this 
paper~\cite{NN12full}.}

\subsection{Ordered Substring Searching}
\label{sec:sortsubstr}
Suppose that a data structure contains a text $T$ and we want to report 
occurrences of a query pattern $P$ in the left-to-right order, i.e., in the 
same order as they appear in $T$; in some case we may want to find only 
the $k$ leftmost occurrences.
 In this section we describe two solutions for this problem. Then we show how sorted range reporting 
can be used to solve the position-restricted variant of this problem. We denote by $\occ$ the number of $P$'s occurrences in $T$ that are reported when a query is answered.

\paragraph{Data Structure with Optimal Query Time.}
Such queries can be answered in 
$O(|P|+\occ)$ time and $O(n)$ space using the suffix tree and the data structure  of Brodal \etal~\cite{BrodalFGL09}. 
Positions of suffixes are stored in lexicographic order in the suffix array $A$;
the $k$-th entry $A[k]$ contains the starting position of the $k$-th suffix in 
the lexicographic order.  In~\cite{BrodalFGL09} the authors described an $O(n)$ space data structure that answers online sorted range reporting queries: 
for any $i\ge j$, we can report in $O(j-i+1)$ time 
all entries $A[t]$, $i\le t\le j$, sorted in increasing order by their 
values. Occurrences of a pattern $P$ can be reported in the left-to-right order as follows.  Using a suffix tree, we find $\lleft(P)$ and $\rright(P)$ in $O(|P|)$ time. Then we report all suffixes in the interval 
$[\lleft(P),\rright(P)]$ sorted by their starting positions 
using the data structure of~\cite{BrodalFGL09} on $A$. 
\begin{corollary}\label{cor:sortlin}
We can answer a sorted  substring matching query in $O(|P|+\occ)$ time using a $O(n)$ space data structure
\end{corollary}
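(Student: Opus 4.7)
The plan is to assemble the corollary from two off-the-shelf linear-space components: a suffix tree of $T$ and the online sorted range reporting structure of Brodal \etal~\cite{BrodalFGL09} built on the suffix array $A$ of $T$. Both use $O(n)$ space, so the total space is $O(n)$; the work is to show the time bound.

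First, I would use the suffix tree in the usual way: in $O(|P|)$ time, descend from the root following the characters of $P$ to locate $v=\locus(P)$, and read off the suffix array interval $[\lleft(P),\rright(P)]$ associated with the subtree of $v$. By the standard property recalled in the preliminaries, the set of occurrences of $P$ in $T$ is exactly $\{A[t]:\lleft(P)\le t\le\rright(P)\}$, and $\occ=\rright(P)-\lleft(P)+1$. Reporting the occurrences in left-to-right order in $T$ is therefore equivalent to listing the values $A[\lleft(P)],\ldots,A[\rright(P)]$ in increasing order of their values.

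Next, I would feed the interval $[\lleft(P),\rright(P)]$ to the Brodal \etal\ structure on $A$, which, by the result invoked just before the corollary statement, reports the entries of any query interval of length $\ell$ in sorted order of their values in $O(\ell)$ time using $O(n)$ space, and moreover works in the online modus so that reporting can be aborted after the first $k$ values if only the $k$ leftmost occurrences are wanted. Summing the two costs gives $O(|P|)$ for the descent plus $O(\occ)$ for the sorted enumeration, i.e., $O(|P|+\occ)$ total, and in the $k$-leftmost variant the second term becomes $O(k)$.

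There is essentially no obstacle here beyond verifying the reduction; the corollary is a black-box combination of two known tools, and the only thing to check carefully is that ``left-to-right in $T$'' coincides with ``sorted by value in $A$'' over the interval $[\lleft(P),\rright(P)]$, which is immediate from the definition of the suffix array and the position set.
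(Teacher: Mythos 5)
Your proposal matches the paper's proof exactly: the paper also locates $[\lleft(P),\rright(P)]$ via the suffix tree in $O(|P|)$ time and then applies the Brodal \etal\ online sorted range reporting structure on the suffix array $A$ to emit the occurrences in increasing position order in $O(\occ)$ time, both pieces using $O(n)$ space. Nothing is missing and no further comment is needed.
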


\paragraph{Succinct Data Structure.}
The space usage of a data structure for sorted pattern matching can be 
further reduced.  We store a compressed suffix array for $T$ and a 
succinct data structure for range minimum queries. 
We use the implementation of the compressed suffix array described in~\cite{GrossiGV03} that needs $(1+1/\eps)nH_k+o(n)$ bits for $\sigma=\log^{O(1)}n$, where $\sigma$ denotes  the alphabet size  and $H_k$ is the $k$-th order entropy.
Using the results of~\cite{GrossiGV03}, we can find the position of the $i$-th lexicographically smallest  suffix in $O(\log^{\eps} n)$ time.  
We can also find $\lleft(P)$ and $\rright(P)$ for any $P$ in 
$O(|P|)$ time. 
We also store the range minimum data structure~\cite{Fis10} for the 
array $A$ defined above.  For any $i\le j$, we can find  such $k=\RMQ(i,j)$ that 
$A[k]\le A[t]$ for any $i\le t\le j$. Observe that $A$ itself is not stored; we only store the structure from~\cite{Fis10} that uses $O(n)$ bits of space. 
Now occurrences of $P$ are reported as follows. 
An initially empty queue $Q$ contains suffix positions;
with every suffix position $p$ we also store an interval $[l_p,r_p]$ and the rank $i_p$ of the suffix that starts at 
position $p$. Let $l=\lleft(P)$ and $r=\rright(P)$. We find 
$i_f=\RMQ(l,r)$ and  the position $p_f$ of the suffix with 
rank $i_f$. The position $p_f$ with its rank $i_f$ and the associated interval $[l,r]$ is inserted into $Q$. We repeat the following steps until $Q$ is empty.
The item with the minimal value of $p_t$ is extracted from 
$Q$. Let $i_t$ and $[l_t,r_t]$ denote the rank and interval stored with $p_t$. We answer queries $i'=\RMQ(l_t,i_t-1)$ and 
$i''=\RMQ(i_t+1,r_t)$ and identify the positions $p'$, $p''$ of suffixes with ranks $i'$, $i''$. Finally, we insert items 
$(p',i',[l_t,i_t-1])$ and $(p'',i'',[i_t+1,r_t])$ into 
$Q$. Using the van Emde Boas data structure, we can 
implement each operation on $Q$ in $O(\log \log n)$ time. 
We can find the position of a suffix with rank $i$ in 
$O(\log^{\eps} n)$ time. Thus the total time that 
we need to answer a query is $O(|P|+\occ\log^{\eps} n)$. 
Our data structure uses $(1+1/\eps)nH_k+O(n)$  bits. 
We observe however that we need $O(\occ\log n)$ additional 
bits at the time when a query is answered. 
\begin{corollary}\label{cor:sortcomp}
If the alphabet size $\sigma=\log^{O(1)}n$, then 
we can answer an ordered substring searching   query in $O(|P|+\occ\log^{\eps}n)$ time 
using a $(1+1/\eps)nH_k+O(n)$-bit data structure
\end{corollary}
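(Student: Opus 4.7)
The plan is to assemble three compact building blocks and orchestrate them with a priority queue. First, I would store the compressed suffix array of Grossi--Gupta--Vitter \cite{GrossiGV03}, which takes $(1+1/\eps)nH_k + o(n)$ bits under the alphabet assumption $\sigma=\log^{O(1)}n$, supports locating $\lleft(P)$ and $\rright(P)$ in $O(|P|)$ time, and lets me map any lexicographic rank $i$ to its text position $A[i]$ in $O(\log^{\eps} n)$ time. Second, I would build Fischer's succinct range minimum structure \cite{Fis10} on the virtual suffix array $A$; this costs $O(n)$ bits and crucially does not require $A$ to be stored explicitly, since it answers $\RMQ(i,j)$ queries without consulting $A$ directly. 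Third, I would use a van Emde Boas priority queue keyed by text positions to manage a set of ``pending'' intervals during enumeration.

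To answer an ordered substring query for $P$, I would first compute $[l,r]=[\lleft(P),\rright(P)]$ with the CSA in $O(|P|)$ time. Then I would run the standard Cartesian-tree traversal implicit in $\RMQ$: compute $i_f = \RMQ(l,r)$, retrieve $p_f = A[i_f]$, and insert the triple $(p_f, i_f, [l,r])$ into the queue $Q$. Repeatedly extracting the triple $(p_t, i_t, [l_t, r_t])$ with smallest $p_t$ yields the next occurrence in left-to-right order; each extraction spawns at most two child intervals $[l_t, i_t-1]$ and $[i_t+1, r_t]$ whose minima are located by further $\RMQ$ calls and whose positions are recovered through the CSA. This depth-first refinement of $[l,r]$ is correct because the minimum of a subarray of $A$ is the leftmost text position among its suffixes, and removing it splits the interval into two disjoint subintervals whose minima are the candidates for the next occurrence.

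For the analysis, each reported occurrence triggers a constant number of $\RMQ$ queries (constant time), a constant number of CSA position retrievals ($O(\log^{\eps} n)$ time each), and $O(1)$ van Emde Boas operations on $Q$, which cost $O(\log \log n)$ each. Since $\log \log n = O(\log^{\eps} n)$, the per-occurrence cost is dominated by the CSA lookup, giving total time $O(|P| + \occ \cdot \log^{\eps} n)$. For space, the CSA contributes $(1+1/\eps)nH_k + o(n)$ bits, the RMQ structure adds $O(n)$ bits, and the queue at any moment holds $O(\occ)$ items of $O(\log n)$ bits each, which is working space used only during query execution.

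The main subtlety I expect is ensuring that $A$ itself is never materialized: the RMQ structure of \cite{Fis10} is explicitly designed to operate without access to the underlying array once built, so during construction I can compute $A$ temporarily to build the RMQ index and then discard it, preserving the $(1+1/\eps)nH_k+O(n)$-bit bound. The other delicate point is the accounting of the transient $O(\occ \log n)$ bits for $Q$, which the statement already flags as additional working space rather than part of the persistent index.
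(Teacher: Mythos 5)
Your proposal matches the paper's own proof essentially step for step: the same Grossi--Gupta--Vitter compressed suffix array for rank-to-position lookups and locating $[\lleft(P),\rright(P)]$, the same Fischer $O(n)$-bit RMQ structure over the virtual array $A$, and the same van Emde Boas priority queue driving a recursive RMQ split of $[l,r]$ into $[l_t,i_t{-}1]$ and $[i_t{+}1,r_t]$. The analysis (per-occurrence cost dominated by the $O(\log^\eps n)$ CSA lookup, transient $O(\occ\log n)$-bit queue) is also identical, so there is nothing substantive to add.
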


\paragraph{Position-Restricted Ordered Substring Searching}
The position restricted substring searching problem was introduced by  M{\"a}kinen and Navarro in~\cite{MakinN07}. Given a range $[i,j]$ 
we want to report all occurrences of $P$ that start at 
position $t$, $i\le t\le j$. 
If we want to report occurrences of $P$ at positions from 
$[i,j]$ in the sorted order, then this is equivalent to 
answering a sorted range reporting query $[i,j]\times [\lleft(P),\rright(P)]$. Hence, we can obtain the same 
time-space trade-offs as in Theorems~\ref{theor:spaceeff} and~\ref{theor:2dim}.

\subsection{Maximal  Points in a 2D Range and Rectangular Visibility}

A point $p$ \emph{dominates} another point $q$ if $p.x\ge q.x$ 
and $p.y\ge q.y$. A point $p\in S$ is \emph{a maximal point} 
if $p$ is not dominated by any other point $q\in S$. 
In a two-dimensional maximal points range  query, we must find  
all maximal points in  $Q\cap S$ for a query rectangle $Q$.
We refer to~\cite{BT11} and references therein for description of 
previous results.

We can answer such  queries using orthogonal range successor queries. 
For simplicity, we assume that all points have different $x$- and $y$-coordinates.
Suppose that maximal points in the range 
$Q=[a,b]\times [c,d]$ must be listed. For $i\ge 1$, we report a point  
$p_i$ such that $p_i.x\ge p.x$ for any $p\in Q_{i-1}\cap S$, where 
$Q_0=Q$ and $Q_j=[a,p_{i}.x]\times [p_i.y,d]$ for $j\ge 1$.
 Our reporting procedure is completed when $Q_i\cap S=\emptyset$.
Clearly, finding a point $p_i$ or determining that no such $p_i$ 
exists is equivalent to answering a range successor query for 
$Q_{i-1}$. Thus we can find $k$ maximal points in 
$O(kg(n))$ time using an $O(nf(n))$ space data structure, 
where $g(n)$ and $f(n)$ are again defined as in Corollary~\ref{cor:succind}.

A point $p\in S$ is rectangularly visible from a point $q$ 
if $Q_{pq}\cap S=\emptyset$, where $Q_{pq}$ is the rectangle 
with points $p$ and $q$ at its opposite corners. 
In the rectangle visibility problem, we must determine all 
points $p\in S$ that are visible from a query point $q$.  
Rectangular visibility problem is equivalent to 
finding maximal points in $Q\cap S$ for $Q=[0,q.x]\times [0,q.y]$.
Hence, we can find points rectangularly visible from 
$q$ in $O(kg(n))$ time using an $O(nf(n))$ space data structure. 

\bibliographystyle{abbrv}
\bibliography{sorted-rep}

\begin{thebibliography}{10}

\bibitem{agarwal1999geometric}
P.~Agarwal and J.~Erickson.
\newblock Geometric range searching and its relatives.
\newblock {\em Contemporary Mathematics}, 223:1--56, 1999.

\bibitem{BenderF04}
M.~A. Bender and M.~Farach-Colton.
\newblock The level ancestor problem simplified.
\newblock {\em Theor. Comput. Sci.}, 321(1):5--12, 2004.

\bibitem{BrodalFGL09}
G.~S. Brodal, R.~Fagerberg, M.~Greve, and A.~L{\'o}pez-Ortiz.
\newblock Online sorted range reporting.
\newblock In {\em Proc.\ 20th ISAAC}, pages 173--182, 2009.

\bibitem{BT11}
G.~S. Brodal and K.~Tsakalidis.
\newblock Dynamic planar range maxima queries.
\newblock In {\em Proc. 38th ICALP}, pages 256--267, 2011.

\bibitem{ChanLP11}
T.~M. Chan, K.~G. Larsen, and M.~Patrascu.
\newblock Orthogonal range searching on the {RAM}, revisited.
\newblock In {\em Proc. 27th SoCG}, pages 1--10, 2011.

\bibitem{Chaz88}
B.~Chazelle.
\newblock A functional approach to data structures and its use in
  multidimensional searching.
\newblock {\em SIAM J. Comput.}, 17(3):427--462, 1988.

\bibitem{IliopoulosCKRW08}
M.~Crochemore, C.~S. Iliopoulos, M.~Kubica, M.~S. Rahman, and T.~Walen.
\newblock Improved algorithms for the range next value problem and
  applications.
\newblock In {\em Proc. 25th STACS}, pages 205--216, 2008.

\bibitem{CrochemoreIR07}
M.~Crochemore, C.~S. Iliopoulos, and M.~S. Rahman.
\newblock Finding patterns in given intervals.
\newblock In {\em Proc. 32nd MFCS}, pages 645--656, 2007.

\bibitem{Fis10}
J.~Fischer.
\newblock Optimal succinctness for range minimum queries.
\newblock In {\em Proc. 9th LATIN}, pages 158--169, 2010.

\bibitem{FW94}
M.~L. Fredman and D.~E. Willard.
\newblock Trans-dichotomous algorithms for minimum spanning trees and shortest
  paths.
\newblock {\em J. Comput. Syst. Sci.}, 48(3):533--551, 1994.

\bibitem{GabowBT84}
H.~N. Gabow, J.~L. Bentley, and R.~E. Tarjan.
\newblock Scaling and related techniques for geometry problems.
\newblock In {\em Proc.\ 16th STOC}, pages 135--143, 1984.

\bibitem{GrossiGV03}
R.~Grossi, A.~Gupta, and J.~S. Vitter.
\newblock High-order entropy-compressed text indexes.
\newblock In {\em Proc. 14th SODA}, pages 841--850, 2003.

\bibitem{Gusfield1997}
D.~Gusfield.
\newblock {\em Algorithms on Strings, Trees, and Sequences - Computer Science
  and Computational Biology}.
\newblock Cambridge University Press, 1997.

\bibitem{KKL07wads}
O.~Keller, T.~Kopelowitz, and M.~Lewenstein.
\newblock Range non-overlapping indexing and successive list indexing.
\newblock In {\em Proc.\ 10th WADS}, pages 625--636, 2007.

\bibitem{LenhofS94}
H.-P. Lenhof and M.~H.~M. Smid.
\newblock Using persistent data structures for adding range restrictions to
  searching problems.
\newblock {\em RAIRO Theor. Inf. and Appl.}, 28(1):25--49, 1994.

\bibitem{MakinN07}
V.~M{\"a}kinen and G.~Navarro.
\newblock Rank and select revisited and extended.
\newblock {\em Theor. Comput. Sci.}, 387(3):332--347, 2007.

\bibitem{MNSW98}
P.~B. Miltersen, N.~Nisan, S.~Safra, and A.~Wigderson.
\newblock On data structures and asymmetric communication complexity.
\newblock {\em J. Comput. Syst. Sci.}, 57(1):37--49, 1998.

\bibitem{NavN12}
G.~Navarro and Y.~Nekrich.
\newblock Top-{\it k} document retrieval in optimal time and linear space.
\newblock In {\em Proc. 22nd SODA}, pages 1066--1077, 2012.

\bibitem{PT06}
M.~P\u{a}tra\c{s}cu and M.~Thorup.
\newblock Time-space trade-offs for predecessor search.
\newblock In {\em Proc.\ 38th STOC}, pages 232--240, 2006.

\bibitem{BoasKZ77}
P.~van Emde~Boas, R.~Kaas, and E.~Zijlstra.
\newblock Design and implementation of an efficient priority queue.
\newblock {\em Math. Sys. Theory}, 10:99--127, 1977.

\bibitem{Yu:2011}
C.-C. Yu, W.-K. Hon, and B.-F. Wang.
\newblock Improved data structures for the orthogonal range successor problem.
\newblock {\em Comput. Geom. Theory Appl.}, 44:148--159, 2011.

\bibitem{YuWK10}
C.-C. Yu, B.-F. Wang, and C.-C. Kuo.
\newblock Efficient indexes for the positional pattern matching problem and two
  related problems over small alphabets.
\newblock In {\em Proc. 21st ISAAC}, pages 13--24, 2010.

\end{thebibliography}

\end{document}